\title{The Cost of Informing Decision-Makers in Multi-Agent Maximum Coverage Problems with Random Resource Values}
\author{{Bryce L. Ferguson, Dario Paccagnan, and Jason R. Marden}
\thanks{This research was supported by ONR Grant \#N00014-20-1-2359, and AFOSR Grants \#FA95550-20-1-0054 and FA9550-21-1-0203}
\thanks{B. L. Ferguson (corresponding author) and J. R. Marden are with the Department of Electrical and Computer Engineering, University of California, Santa Barbara, CA, {\texttt{\{blferguson,jrmarden\}@ece.ucsb.edu}}.}
\thanks{D. Paccagnan is with the Department of Computing, Imperial College London, {\texttt{\{d.paccagnan\}@imperial.ac.uk}}.}
}
\begin{document}
\maketitle

\begin{abstract}
    The emergent behavior of a distributed system is conditioned by the information available to the local decision-makers.
    Therefore, one may expect that providing decision-makers with more information will improve system performance; in this work, we find that this is not necessarily the case.
    In multi-agent maximum coverage problems, we find that even when agents' objectives are aligned with the global welfare, informing agents about the realization of the resource's random values
    can reduce equilibrium performance by a factor of 1/2.
    This affirms an important aspect of designing distributed systems: information need be shared carefully.
    We further this understanding by providing lower and upper bounds on the ratio of system welfare when information is (fully or partially) revealed and when it is not, termed the value-of-informing.
    We then identify a trade-off that emerges when optimizing the performance of the best-case and worst-case equilibrium.
\end{abstract}

\section{Introduction}

In large-scale systems, the prospect of distributing decision-making to local entities is becoming increasingly enticing as a method to reduce complexity while maintaining some level of performance.
This can take the form of swarm control for robotic fleets~\cite{korsah2013comprehensive}, autonomous driving decisions in mobility services~\cite{Wollenstein2020}, local task assignment decisions~\cite{mathew2015planning}, and many more.
Taking a distributed approach entails assigning each agent a decision-making algorithm, such as maximizing an assigned local objective function~\cite{Marden2014}, then analyzing the system's equilibria~\cite{Swenson2018}.
As agents need not possess full knowledge of the overall system; the local decisions (and, ultimately, global behavior) are dependent on the information communicated with and between the agents~\cite{Rahili2017,Ferguson2022robust}
In this work, we address how available information affects system performance.

We focus on maximum coverage problems: a class of models in which each agent selects a set of resources from a ground set, with the objective of maximizing the value of covered resources.
To solve this in a distributed fashion, each agent is given a utility function to evaluate what set of resources to cover; this forms a game played by the agents with their resource selection as their action and the evaluation of their assigned utility function as their payoff.
Existing work has focused on how to design these utility rules and how well the resulting equilibria of the emergent game approximate the optimal welfare~\cite{Paccagnan2018a,Gairing2009,Ramaswamy2022}.
In this work, we generalize this model to consider the case where agents have uncertainty about the resources' values.
In this setting, we ask how revealing information to local decision-makers affects equilibrium performance. Interestingly, we find that \emph{revealing truthful information about the system state can worsen system performance}. While this phenomenon has been observed before in social systems~\cite{Acemoglu2018}, here we find that similar conclusions hold even when the local decision-makers' objectives are \emph{aligned} with the global welfare.

To study this, we consider a \emph{Bayesian persuasion} framework, 
in which a well-informed system operator can strategically reveal information to agents using
messages (or signals) which contain partial information~\cite{Kamenica2011}.
In this work, we study how this information revealing affects equilibrium welfare.
To this end, we introduce a new performance metric termed the \emph{value-of-informing}, which measures the ratio between the equilibrium welfare under an information-revealing policy and when no information is revealed.
This measures the gain or loss in welfare from revealing information.


The framework of Bayesian persuasion has gained traction in the areas of economics, operations research, and engineering, but typically for settings concerned with the behavior and beliefs of human users (e.g., traffic routing apps~\cite{Castiglioni2020,Ferguson2022avoid,Tavafoghi2018}, pricing/investing decisions~\cite{Chen2020signalling}, hybrid work policies~\cite{shahOptimalInformationProvision2022}, etc.).
Results are typically restricted to a binary classification on whether revealing full information helps or not~\cite{sezer2021social,Kamenica2011}, or methods to compute optimal information revealing policies in limited settings~\cite{Zhu2022,wu2019information,Tavafoghi2018}, often with no guarantee on the magnitude of improvement.
Here, we adapt the ideas of information provisioning to the setting of engineered systems, where designed decision-making components can improve their estimate of the system state by receiving relevant messages (e.g., a fleet of surveillance drones receiving live map updates).


Revealing information to local decision-makers can obviously improve the welfare of emergent system equilibria; however, in this work, we find that this is not always the case.
In fact, system performance may degrade by a factor of 1/2 when revealing truthful information to local decision-makers.
The possible loss in system welfare from informing decision-makers comes from two sources (1) the multiplicity of equilibria and (2) the local objective assigned to each decision-maker.
We study the aforementioned value-of-informing when considering the best-case and worst-case equilibria for different local objectives.
Ultimately, we highlight a trade-off between the possible loss from revealing information to best-case and worst-case equilibrium guarantees when agents' local objectives are designed.

\section{Problem Formulation}
\subsection{Maximum Coverage Problem}
Maximum coverage problems have been used to model resource allocation, sensor coverage, job scheduling, and more~\cite{paccagnanUtilityMechanismDesign2022}.
Consider the multi-agent maximum coverage problem, in which $\mc{R} = \{1,\ldots,R\}$ is a finite set of resources.
For each resource $r \in \mc{R}$, let $v_r \geq 0$ be the value of that resource; further, let $v \in \mathbb{R}_{\geq 0}^{|\mc{R}|}$ be the vector containing each resource value.
Let $N = \{1,\ldots,n\}$ be a set of agents, where each agent $i \in N$ can be assigned to cover a subset of resources $a_i \subseteq \mc{R}$.
The set of allowable assignments for each agent is defined by $\actionset_i \subseteq 2^{\mc{R}}$. 
When each agent is assigned, an allocation of agents is denoted $a = (a_1,\ldots,a_n) \in \actionset = \actionset_1 \times \ldots \times \actionset_n$.
Let $(G,v)$ define a maximum coverage problem where $G = (N,\mc{R},\actionset)$.

In an allocation $a$, the system welfare is equal to the total value of resources covered by at least one agent, i.e.,
$W(a;v) = \sum_{r \in \cup_{i \in N}a_i} v_r.$
However, finding an optimal allocation $\Opt{a} \in \argmax_{a \in \actionset} W(a;v)$ is NP-hard~\cite{Gairing2009}.
It is for this reason, we consider a distributed solution technique to approximate this optimal solution.

\subsection{Distributed Decision Making}
Let each agent $i \in N$ possess a local objective function
$$U_i(a;v) = \sum_{r \in a_i} v_r f(|a|_r),$$
which depends on their own action and the actions of every other agent.
This local objective, or \textit{utility function}, is parameterized by a local utility rule $f:\mathbb{N}\rightarrow \mathbb{R}$ that takes as argument $|a|_r$, the number of agents covering resource $r$ in allocation $a$.
The system operator can adopt a utility rule $f$ without exact knowledge of the problem instance if needed.

When agents sequentially and repeatedly update their assignment to one that currently maximizes their utility function, these best-response dynamics have fixed points that are the Nash equilibria of the underlying game.
The convergence of best response dynamics in potential games to pure Nash equilibria are addressed in~\cite{Gairing2009,Swenson2018}.
Nash equilibrium allocations can be defined by
\begin{equation}\label{eq:Nash_def_det}
    U_i(\Nash{a};v) \geq U_i(a_i^\prime,\Nash{a}_{-i};v)~\forall a_i^\prime\in \actionset_i,~i \in N,
\end{equation}
where $a_{-i}$ denotes the allocation of all agents but player $i$.
Let $\NE(G,v,f)$ denote the set of states satisfying \eqref{eq:Nash_def_det}.
These states represent the possible solutions of the distributed dynamics, and we will consider their welfare as an approximation of the maximum coverage problem.

\begin{figure}
    \centering
    \includegraphics[width=0.485\textwidth]{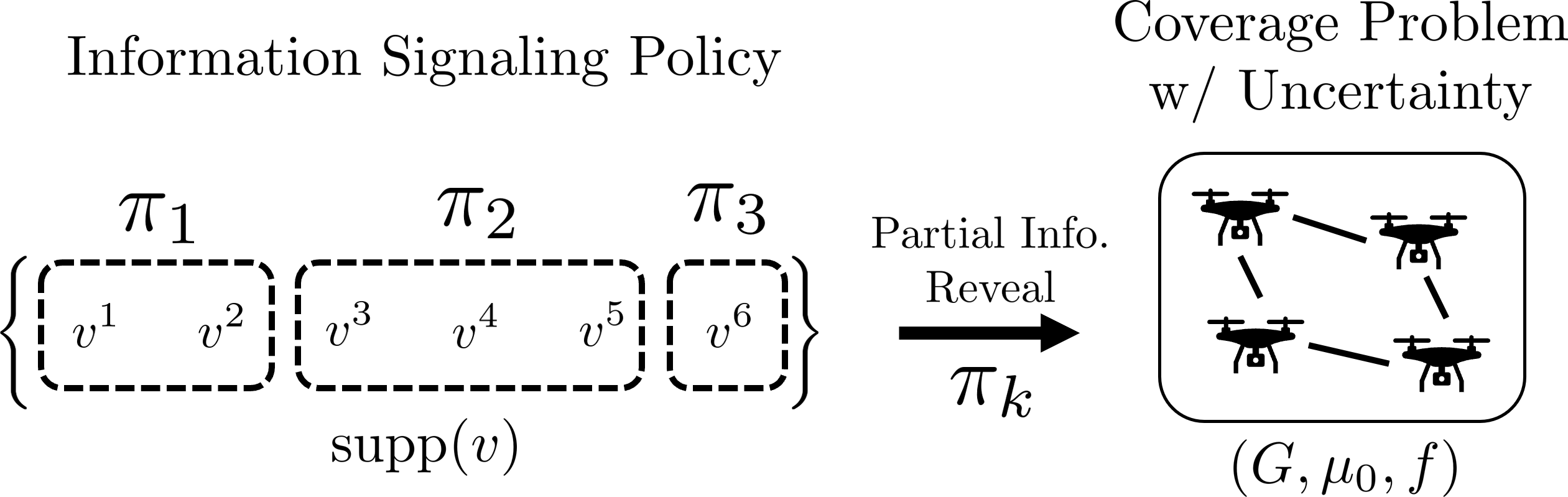}
    \caption{Depiction of information signaling in maximum coverage problems. On the left is the support of a random state variable $v$. 
    At right is a maximum coverage problem, to which we have assigned each agent a local objective.
    The agents in the coverage problem possess the prior distribution of the unknown state variable and receive some partial information $\pi_k$ about the realization.
    The manner in which information is revealed will alter how agents evaluate their objectives and change the emergent behavior.}
    \label{fig:sig_demo}
\end{figure}

\subsection{Uncertainty and Information Signaling}
In this work, we consider how uncertainty and information can affect the efficacy of distributed decision-making.
We consider this uncertainty in the form of randomness for the resources values.
Let $v \in \mathbb{R}_{\geq 0}^{|\mathcal{R}|}$ (the vector containing the value of each resource $r \in \mathcal{R}$) be a discrete random variable with prior distribution $\mu_0$.
A realization of $v$ determines each resource's value $[v_1,\ldots,v_{|\mathcal{R}|}]=v$, i.e., the resource values may be correlated.
Let the support of $v$ be $\statespace := \supp(v)$.

First, we consider the case where agents are \textit{uninformed} about the system state, i.e., they know the prior distribution $\mu_0$ but not the exact realization of $v$.
In this setting, the agents optimize their expected utility, which we will denote
$\ExpU_i(a;\mu_0) = \mathbb{E}_{v \sim \mu_0} \left[ \sum_{r \in a_i} v_r f(|a|_r) \right].$
Let $\NE(G,\mu_0,f) = \NE(G,\mathbb{E}_{v\sim\mu_0}[v],f)$.
Additionally, the objective of the maximum coverage problem is to maximize the expected welfare, i.e.,
$\ExpW(a;\mu_0) = \mathbb{E}_{\mu_0} \left[ \sum_{r \in \cup_{i \in N} a_i} v_r \right].$

As a means to try and improve the performance of the distributed decision-making agents, we may consider revealing information about the realization of the system state.
One option is to reveal full information (or let agents know the realization exactly); however, either due to communication constraints or by design choice, it is often meaningful to reveal only partial information as well.
In line with the broader Bayesian Persuasion framework, consider revealing information with an \textit{information signaling policy} $\Pi = \{\pi_1,\ldots,\pi_m\}$, where $\pi_k \subseteq \statespace$, $\pi_j \cap \pi_k = \emptyset$, and $\bigcup_{k=1}^m \pi_k = \statespace$.
This signaling policy $\Pi$ forms a partition over the support of our random state variable $v$.
The signal $\pi \in \Pi$ is revealed to the agents when $v \in \pi$.\footnote{In this work, we consider signaling policies that are deterministic mappings from state to signal. In general, signal $\pi$ could be drawn randomly based on state $v$. Many of the results easily generalize to this setting, but for ease of exposition and relevance to our problem setting of informing designed decision-makers, we present the results for deterministic signaling by treating $\Pi$ as a partition of $\statespace$.}
The new system operates as follows:
A system operator adopts a signaling policy $\Pi$ and utility rule $f$.
A state $v$ is drawn from $\mu_0$, and all agents are informed of which element of $\Pi$ the realization belongs to.
The corresponding signal $\pi$ is sent, and each player $i \in N$ computes the posterior belief on the realization $\mu_\pi(x) = \mathbb{P}[v=x|v \in \pi] = \mu_0(x)/\left(\sum_{v^\prime\in\pi}\mu_0(v^\prime)\right)$ if $v \in \pi$ and zero otherwise.
The agents then seek to maximize their expected utility with the posterior belief.

Agents may now condition their action on the received signal.
Let $\strat \in \mathcal{A}^\Pi$ denote a joint strategy, where an element $\strat_i(\pi) \in \mathcal{A}$ captures the action agent $i$ takes when they receive signal $\pi$.
In a strategy profile $\strat$, agent $i$ has an expected payoff of
$\ExpU_i(\strat;\mu_0,\Pi) = \mathbb{E}_{v \sim \mu_0}\left[\sum_{r \in \strat_i} v_r f(|\strat|_r) \right]$.
Note that $\strat_i$ is implicitly a function of the received signal $\pi$, which itself is determined by the state variable $v$; as such, each of $\strat_i$, $\pi$, and $v$ are random variables.
The expected welfare becomes
$\ExpW(\strat;\mu_0,\Pi) = \mathbb{E}_{v \sim \mu_0}\left[\sum_{r \in \cup_{i \in N}\strat_i} v_r \right]$.

When agents follow best-response dynamics, the set of fixed points becomes the set of \textit{Bayes-Nash equilibria}, $\BNE(G,\mu_0,\Pi,f)$.
A strategy $\BNash{\alpha}$ in this set satisfies
\begin{equation}\label{eq:Nash_def_bayes}
    \ExpU_i(\BNash{\strat};\mu_0,\Pi) \geq \ExpU_i(\strat_i^\prime,\BNash{\strat}_{-i};\mu_0,\Pi)~\forall \strat_i^\prime \in \actionset_i^\Pi.
\end{equation}
The signaling policy $\Pi$ will alter this equilibria set and thus the guarantees of our distributed solution to the maximum coverage problem.

Our motivation for this work is understanding how giving agents information can affect system welfare.
Due to the multiplicity of equilibria, we consider two perspectives: the \textit{optimistic perspective} in which the system designer cares about the best attainable system performance, and the \textit{pessimistic perspective} in which the system designer cares about the worst possible performance.
For the optimistic perspective, the system designer evaluates a signaling policy $\Pi$ by its effect on the system welfare in the best-case equilibrium; as such, let the optimistic \textit{value-of-informing} with signaling policy $\Pi$ be
\begin{equation*}\label{eq:voip_def}
    \voip(G,\mu_0,\Pi,f) = \frac{\max_{\strat \in \BNE(G,\mu_0,\Pi,f)} W(\strat;\mu_0,\Pi)}{\max_{a \in \NE(G,\mu_0,f)}W(a;\mu_0)},
\end{equation*}
which measures the gain in optimistic welfare by using policy $\Pi$.
Similarly, for the pessimistic perspective, the system operator evaluates a signaling policy $\Pi$ by its effect on the system welfare in the worst-case equilibrium; let the pessimistic value-of-informing be
\begin{equation*}\label{eq:voim_def}
    \voim(G,\mu_0,\Pi,f) = \frac{\min_{\strat \in \BNE(G,\mu_0,\Pi,f)} W(\strat;\mu_0,\Pi)}{\min_{a \in \NE(G,\mu_0,f)}W(a;\mu_0)},
\end{equation*}
which is the same ratio but now with the worst-case equilibrium strategy and allocation.
These values inform a system operator of how revealing information will affect their equilibrium guarantees.
They differ from the well-known price-of-anarchy/stability measures in that they relate two equilibrium performances rather than equilibrium to optimal.

\section{Main Results}\label{sec:main}
The main contribution of this work is in lower and upper bounding the value-of-informing for best- and worst-case equilibria.
These bounds depend on agents' local decision-making process.
In \cref{subsec:mc}, we focus on the case where agents' payoffs are aligned with the system objective and find that revealing information improves the best-case equilibrium ($\voip \geq 1$) but can worsen the worst-case equilibrium ($\voim \leq 1$).
In \cref{subsec:gen}, we generalize these bounds to any local utility rule $f$.
In \cref{subsec:trade}, we observe a trade-off in the lower bounds on $\voip$ and $\voim$; \cref{fig:VoI_tradeoff} characterizes this trade-off and highlights the fact that altering the local objectives of agents affects the efficacy of revealing information.

\subsection{Marginal Contribution}\label{subsec:mc}

The first utility design we will consider is the marginal contribution, where each agent makes decisions that maximize their contribution to the system welfare, i.e.,
$U_i(a) = W(a) - W(a_{-i}).$
This can be expressed by the local utility rule
$f^{\rm mc}(x) := \indicator{x=1}.$
When agents follow this utility rule, their preferences are aligned with global welfare.
This utility rule has the property that it maximizes the best-case equilibrium guarantee, known as the price-of-stability ratio~\cite{Ramaswamy2022}.
However, as of yet, the effect of revealing information to decision-makers using this utility function has not been addressed.
In \cref{thm:mc}, we address this question by providing lower and upper bounds on the value-of-informing for the best- and worst-case equilibria.

\begin{theorem}\label{thm:mc}
In a Bayesian Maximum Coverage problem $(G,\mu_0,\Pi)$,  with utility rule $f^{\rm mc}$, the value-of-informing for the best-case equilibrium satisfies
\begin{subequations}
\begin{equation}\label{eq:thm1_voip}
    1 \leq \voip(G,\mu_0,\Pi,f^{\rm mc}) \leq \lvert\Pi\rvert,
\end{equation}
and for the worst-case equilibrium satisfies
\begin{equation}\label{eq:thm1_voim}
    1/2 \leq \voim(G,\mu_0,\Pi,f^{\rm mc}) \leq 2\lvert\Pi\rvert.
\end{equation}
\end{subequations}
All of these bounds are tight, but the upper bounds on $\voim$. 
\end{theorem}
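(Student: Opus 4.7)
The plan hinges on exploiting the fact that, under the marginal contribution rule $f^{\rm mc}$, both the uninformed game (with values $\mathbb{E}[v]$) and the Bayesian game induced by any signaling policy $\Pi$ are exact potential games whose potential function equals the (expected) welfare. This yields two structural ingredients: the best-case (B)NE welfare coincides with the \emph{maximum} (expected) welfare, and the worst-case (B)NE welfare is at least half of it by the standard $1/2$ price-of-anarchy bound for $f^{\rm mc}$ in coverage problems. After this reduction, bounding $\voip$ and $\voim$ becomes an exercise in comparing $\max_a \mathbb{E}_{v\sim\mu_0}[W(a;v)]$ to $\sum_{\pi\in\Pi}\mu_0(\pi)\max_a \mathbb{E}[W(a;v)\mid v\in\pi]$.

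For the best-case bounds I would first rewrite the two quantities in the $\voip$ ratio as above. The lower bound $\voip\geq 1$ then follows by the Jensen-type swap $\sum_\pi \mu_0(\pi)\max_a \mathbb{E}[W(a;v)\mid v\in\pi]\geq \max_a \sum_\pi \mu_0(\pi)\mathbb{E}[W(a;v)\mid v\in\pi]=\max_a \mathbb{E}[W(a;v)]$, equivalently noting that constant strategies are admissible in the Bayesian game. For the upper bound, nonnegativity of $W$ gives, for every $a$, $\mu_0(\pi)\mathbb{E}[W(a;v)\mid v\in\pi]\leq \mathbb{E}[W(a;v)]$; taking max over $a$ inside and summing over $\pi$ produces the factor $|\Pi|$.

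For the worst-case bounds I would apply the per-signal price-of-anarchy inequality: conditional on signal $\pi$, the agents play a deterministic coverage game with values $\mathbb{E}[v\mid v\in\pi]$ under $f^{\rm mc}$, whose worst NE welfare is at least $(1/2)\max_a \mathbb{E}[W(a;v)\mid v\in\pi]$. Averaging over $\pi$ and reusing the max/expectation swap of the previous paragraph gives worst BNE $\geq (1/2)\max_a\mathbb{E}[W(a;v)]\geq (1/2)$ worst NE, establishing $\voim\geq 1/2$. The (non-tight) upper bound $\voim\leq 2|\Pi|$ then drops out from chaining worst BNE $\leq$ best BNE $\leq |\Pi|\cdot$best NE $\leq 2|\Pi|\cdot$worst NE.

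What remains is tightness of the three tight bounds. Tightness of $\voip\geq 1$ is immediate for a degenerate prior. For $\voip= |\Pi|$, I would use a single-agent instance with $|\Pi|$ mutually exclusive resources of unit value, where each state places the value on a distinct resource; uninformed play achieves welfare $1/|\Pi|$, while informed play achieves welfare $1$. The main technical obstacle is tightness of $\voim\geq 1/2$: I expect to need an instance in which the uninformed game admits a unique NE at the optimum, while the enlarged Bayesian strategy space $\mathcal{A}^\Pi$ unlocks a low-welfare BNE per signal matching the $1/2$ PoA bound of $f^{\rm mc}$. My approach is to embed the classical PoA $=1/2$ example for marginal contribution into a prior whose signals do not resolve the coverage ambiguity in the expected-value game but do create the additional local optima of expected welfare responsible for the bad equilibria.
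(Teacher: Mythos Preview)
Your proposal is correct and follows essentially the same route as the paper: decompose the Bayesian game signal-by-signal, use that under $f^{\rm mc}$ the best (B)NE attains the optimal welfare $W^\star$, bound $\voip$ via the ``max outside vs.\ inside the expectation'' swap (equivalently, convexity and monotonicity/homogeneity of $W^\star$), and bound $\voim$ by chaining these inequalities with the $1/2$ price-of-anarchy for $f^{\rm mc}$. Your tightness constructions for $\voip\geq 1$ and $\voip\leq |\Pi|$ match the paper's.

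The one place where your sketch is incomplete is the tightness of $\voim\geq 1/2$, which you correctly flag as the main obstacle. Your stated intuition is right, but the construction requires a little more than ``embedding the classical PoA $=1/2$ example'': you must arrange the prior so that the uninformed game has a \emph{unique} equilibrium (hence equal to the optimum), while informing makes the bad equilibrium appear with probability approaching one. The paper achieves this with three resources $\{r_1,r_2,r_3\}$, two players with $\mathcal{A}_1=\{r_1,r_2\}$, $\mathcal{A}_2=\{r_2,r_3\}$, $v_1=1$, $v_3=0$, and $v_2$ random: $v_2=1-\varepsilon$ w.p.\ $1-p$ and $v_2=1+\varepsilon(1-p)$ w.p.\ $p$. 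Then $\mathbb{E}[v_2]<1$ forces the unique uninformed equilibrium $(r_1,r_2)$ with welfare $\approx 2$, whereas on the high-$v_2$ signal the bad equilibrium $(r_2,r_3)$ with welfare $\approx 1$ appears; sending $p\to 1$, $\varepsilon\to 0$ gives $\voim\to 1/2$. The delicate point your outline does not yet capture is the asymmetric perturbation of $v_2$ around $1$ combined with $p\to 1$, which simultaneously keeps $\mathbb{E}[v_2]<1$ (uniqueness uninformed) and makes the bad informed equilibrium occur almost surely.
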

Before proving the statement, we discuss the consequences of these results.
We first see that revealing more information (increasing the cardinality of $\Pi$) provides significant opportunities for improvement in either perspective\footnote{A more refined understanding of this improvement can be attained by considering the difference between the realizations and the prior mean.}.
However, revealing information need not always have such a positive effect.
In the optimistic perspective, revealing information can only improve performance ($\voip \geq 1$); however, doing so does not come without consequence, as revealing information can reduce the quality of the worst-case equilibrium by a factor of 1/2 ($\voim = 1/2$).
This fact affirms an important property of information in a multi-agent system: revealing information must be done carefully.
The proof relies on the following lemma characterizing Bayes-Nash equilibria and the expected welfare.
\begin{lemma}\label{lem:BNE_average}
A joint strategy $\alpha$ is a Bayes-Nash equilibrium if and only if $(\alpha_1(\pi),\ldots,\alpha_n(\pi)) \in \NE(G,\mathbb{E}[v\mid \pi_k],f)$ for each $\pi \in \Pi$.
Additionally, the expected welfare of a joint strategy $\alpha$ is a weighted average of the welfare of the joint actions $\alpha(\pi)$ in the respective deterministic games, i.e.,
$$\ExpW(\strat;\mu_0,\Pi) = \sum_{k=1}^m p_kW\left(\strat(\pi_k);\mathbb{E}[v|\pi_k]\right),$$
where $p_k = \sum_{v \in \pi_k}\mu_0(v)$.
\end{lemma}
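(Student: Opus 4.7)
The plan is to exploit the fact that both the utility $U_i(a;v) = \sum_{r \in a_i} v_r f(|a|_r)$ and the welfare $W(a;v) = \sum_{r \in \cup_i a_i} v_r$ are \emph{linear} in the value vector $v$ for any fixed allocation $a$. Combined with the observation that $\Pi$ is a partition of $\statespace$, this linearity is all that is needed to decompose both the expected utility and the expected welfare signal-by-signal.

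First, I would rewrite the expected utility by splitting the expectation over the partition:
\begin{equation*}
\ExpU_i(\strat;\mu_0,\Pi) = \sum_{k=1}^m \sum_{v \in \pi_k} \mu_0(v) \sum_{r \in \strat_i(\pi_k)} v_r f(|\strat(\pi_k)|_r),
\end{equation*}
which is legal because $\strat_i$ is constant on each $\pi_k$ by construction. Pulling the factor $p_k = \sum_{v\in\pi_k}\mu_0(v)$ out and using linearity in $v$ then gives $\ExpU_i(\strat;\mu_0,\Pi) = \sum_{k=1}^m p_k \, U_i(\strat(\pi_k); \mathbb{E}[v\mid\pi_k])$. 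The identical computation, with $\cup_i \strat_i(\pi_k)$ replacing $\strat_i(\pi_k)$ and $f\equiv 1$ in the appropriate sense, proves the welfare identity stated in the lemma.

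For the equilibrium characterization, I would exploit the fact that a deviation $\strat_i' \in \actionset_i^\Pi$ is nothing more than an independent choice of action on each signal $\pi_k$. The decomposition above shows that changing $\strat_i'(\pi_k)$ affects only the $k$th term of $\ExpU_i(\strat_i',\BNash{\strat}_{-i};\mu_0,\Pi)$, weighted by $p_k > 0$. Therefore the global no-deviation condition \eqref{eq:Nash_def_bayes} is equivalent to requiring, for each $k$ and each $i$, that $\BNash{\strat}_i(\pi_k)$ solve $\max_{a_i' \in \actionset_i} U_i(a_i', \BNash{\strat}_{-i}(\pi_k); \mathbb{E}[v\mid\pi_k])$, which is exactly the statement that $\BNash{\strat}(\pi_k) \in \NE(G, \mathbb{E}[v\mid\pi_k], f)$. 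Both directions follow from this equivalence; the ``only if'' direction uses that one may freely deviate on a single signal while leaving the others fixed, and the ``if'' direction aggregates the per-signal inequalities with weights $p_k$.

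There is no substantive obstacle here: the lemma is a bookkeeping result that rests on linearity of the primitives in $v$ and on $\Pi$ being a partition. The only point that merits a line of justification is that independently deviating on each signal is admissible (so the pointwise best-response property is necessary, not only sufficient), and that $p_k > 0$ for all $k \in \{1,\ldots,m\}$ (otherwise $\pi_k$ could be omitted from $\Pi$ without loss), which ensures the weighted decomposition can be used in both directions of the equivalence.
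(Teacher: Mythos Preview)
Your proposal is correct and follows essentially the same route as the paper: you decompose the expected utility over the partition and use linearity in $v$ to obtain $\ExpU_i(\strat;\mu_0,\Pi) = \sum_{k=1}^m p_k\, U_i(\strat(\pi_k);\mathbb{E}[v\mid\pi_k])$, then argue that the BNE condition decouples signal-by-signal because deviations can be chosen independently on each $\pi_k$. The paper phrases the first step via the law of total expectation rather than explicitly invoking linearity in $v$, but the content is identical; your version is, if anything, slightly more explicit about why both directions of the equivalence hold and about the role of $p_k>0$.
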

\begin{proof}
Let $\alpha \in \mathcal{A}^\Pi$ denote a joint strategy.
We show the first claim by observing the following transformation for any $\alpha$,
\begin{subequations}\label{eq:Util_expec_proof}
\begin{align}
    \ExpU_i(\strat;\mu_0,\Pi) &= \mathbb{E}\left[ \sum_{r \in \mathcal{R}} \mathbb{E}[v_r|\pi]f(|\alpha(\pi)|_r)\right]\label{eq:Util_expec_proof_b}\\
    &= \sum_{k=1}^m p_k U_i\left(\alpha(\pi_k);\mathbb{E}[v |\pi_k]\right),\label{eq:Util_expec_proof_c}
\end{align}
\end{subequations}
where \eqref{eq:Util_expec_proof_b} holds from the law of total expectation.
Because $\alpha$ can be any $m$-tuple of joint actions, \eqref{eq:Nash_def_bayes} is satisfied if and only if
$U_i\left(\alpha(\pi);\mathbb{E}[v |\pi]\right) \geq U_i\left(a_i^\prime,\alpha_{-i}(\pi);\mathbb{E}[v |\pi]\right),$
for all $a^\prime_i \in \mathcal{A}_i$, $\pi \in \Pi$; or, that $\alpha(\pi)$ is a Nash equilibrium for the deterministic game $G$ with values $\mathbb{E}[v |\pi]$ for each $\pi \in \Pi$.
The second claim follows \eqref{eq:Util_expec_proof} with welfare in place of utility.
\end{proof}

\noindent\textit{Proof of \cref{thm:mc}:}
\textit{Best-case equilibrium} - 
We will make use of the function $W^\star(v)$, which denotes the welfare of an optimal allocation in $G$ when the values are $v$.
We note that with the marginal contribution utility rule, each optimal allocation $a^{\rm opt}$ is an equilibrium; thus, the welfare of the best Nash equilibrium is the optimal welfare~\cite{Ramaswamy2022}, or $W^\star(v) = \max_{a \in \mathcal{A}}W(a;v) = \max_{\Nash{a} \in \NE(G,v,f^{\rm mc})}W(\Nash{a})$.
We first make several observations about the function $W^\star$.
Observe that
$W^\star(v) = \max_{a \in \mathcal{A}} \sum_{r \in \mathcal{R}} v_r \indicator{|a|_r > 0},$
is the point-wise maximum of a set of affine (and thus convex) functions of $v$, which is itself convex.
Further, $W^\star$ is positively homogeneous, i.e., $W^\star(\lambda v) = \lambda W^\star(v)$ for all $\lambda \geq 0$ and $v \geq 0$, and $W^\star$ is monotone in $v$, i.e., $v \succeq v^\prime \Rightarrow W^\star(v) \geq W^\star(v^\prime)$ where ``$\succeq$" denotes the element-wise inequality.

Using the properties of $W^\star$, we will prove the bounds on $\voip$.
First, the lower bound.
Consider the Bayesian covering game $(G,\mu_0,\Pi)$.
Observe that
\begin{subequations}\label{eq:w_star_jensen}
    \begin{align}
    \max_{a \in \NE(G,\mu_0,f^{\rm mc})} &\ExpW(a;\mu_0) 
    = W^\star\left(\sum_{k =1}^m p_k \mathbb{E}[v|\pi_k]\right)\label{eq:w_star_jensen_a}\\
    &\leq \sum_{k =1}^m p_kW^\star(\mathbb{E}[v|\pi_k])\label{eq:w_star_jensen_b}\\
    &= \max_{\strat \in \BNE(G,\mu_0,\Pi,f^{\rm mc})} \ExpW(\strat;\mu_0,\Pi),\label{eq:w_star_jensen_c}
\end{align}
\end{subequations}
where \eqref{eq:w_star_jensen_a} holds from the fact the maximum-welfare Nash equilibrium is a system optimum and the second claim of \cref{lem:BNE_average},
\eqref{eq:w_star_jensen_b} holds from $W^\star$ convex and Jensen's inequality, and \eqref{eq:w_star_jensen_c} holds from the second claim of \cref{lem:BNE_average} and the first claim of \cref{lem:BNE_average} with the fact the maximum-Nash is a system optimum.
Rearranging terms gives the first inequality in \eqref{eq:thm1_voip}.
It is tight when $|\mathcal{V}|=1$.

Now, we consider the upper bound on $\voip$.
\begin{subequations}\label{eq:voip_mc_up}
    \begin{align}
    &\max_{\alpha \in \BNE(G,\mu_0,\Pi,f^{\rm mc})} \ExpW(\alpha;\mu_0,\Pi) 
    = \sum_{k =1}^m p_kW^\star(\mathbb{E}[v|\pi_k])\label{eq:voip_mc_up_a}\\
    &\hspace{20pt}= \sum_{k=1}^m W^\star\left(p_k\mathbb{E}[v|\pi_k]\right)\leq \sum_{k=1}^m W^\star(\mathbb{E}_{\mu_0}[v])\label{eq:voip_mc_up_b}\\
    &\hspace{40pt}= \lvert \Pi \rvert \left(\max_{a \in \NE(G,\mu_0,f^{\rm mc})} \ExpW(a;\mu_0)\right),\label{eq:voip_mc_up_d}
\end{align}
\end{subequations}
where \eqref{eq:voip_mc_up_a} holds from the fact a Bayes-Nash joint strategy $\alpha$ is an $m$-tuple of Nash equilibria, the maximum-welfare Nash equilibrium is optimal in $W(\cdot~;v)$ and the second claim in \cref{lem:BNE_average}.
\eqref{eq:voip_mc_up_b} holds from $W^\star$ positive homogeneous and the monotonicity of $W^\star$; more specifically, 
\begin{multline*}
    \mathbb{E}_{\mu_0}[v_r] = \sum_{k \in [m]}p_k\mathbb{E}[v_r|\pi_k]\\ = p_k\mathbb{E}[v_r|\pi_k] + \sum_{k^\prime \in [m]\setminus k}p_{k^\prime}\mathbb{E}[v_r|\pi_{k^\prime}] \geq p_k\mathbb{E}[v_r|\pi_k],
\end{multline*}
holds $\forall$ $r \in \mathcal{R}$, $\pi_k \in \Pi$.
\eqref{eq:voip_mc_up_d} holds from the definition $W^\star$.

To see this is tight, consider a problem with $R$ resources and one agent who can select a single one of them, i.e., $\mathcal{A}_1 = \{1,\ldots,R\}$.
Each resource can take on one of two values: $0$ or $1$.
The prior $\mu_0$ is that exactly one resource is ever of value 1 with equal probability; so the support of $v$ has $R$ elements, each of which occurs with probability $1/R$.
When uninformed, the single agent is indifferent over their actions and cannot attain a payoff greater than $1/R$.
When informed, the single agent can always select the resource of value 1.
Thus ${\rm VoI}^+ = R = |\statespace| = |\Pi|$.

\noindent\textit{Worst-case equilibrium} - 
To focus on equilibrium strategies, let $\Nash{a}(v)$ denote a Nash equilibrium joint action in the game $G$ when the values are $v$.
The following steps will hold for any Nash equilibrium, and the proof will be completed by considering $\Nash{a}(v)$ as the welfare minimizing Nash equilibrium.
Observe that the uninformed Nash welfare satisfies
\begin{multline*}
    W(\Nash{a}(\mathbb{E}_{\mu_0}[v]);\mathbb{E}_{\mu_0}[v]) \leq W^\star(\mathbb{E}_{\mu_0}[v])\leq \\
    \sum_{k=1}^m p_k W^\star(\mathbb{E}[v|\pi_k])\hspace{-1.5pt} \leq  \hspace{-1.5pt}\sum_{k=1}^m p_k 2W(\Nash{a}(\mathbb{E}[v|\pi_k]);\mathbb{E}[v|\pi_k]),
\end{multline*}
where the first inequality holds from the definition of $W^\star$, the second holds from properties of $W^\star$ shown in the first part of the proof, and the third holds from the price-of-anarchy bound of 1/2~\cite{Vetta2002}.
Letting $\Nash{a}(v)$ be the worst-case Nash equilibrium when the values are $v$ in each occurrence, the rightmost expression is the worst-case welfare in a Bayes-Nash joint strategy via \cref{lem:BNE_average}.
This gives the first inequality in \eqref{eq:thm1_voip}.

To see that this bound is tight, consider a resource allocation game with three resources, $\mathcal{R} = \{1,2,3\}$, and two players, $N = \{1,2\}$ with two actions each: $\mathcal{A}_1 = \{r_1,r_2\}$ and $\mathcal{A}_2 = \{r_2,r_3\}$.
Let resource $r_1$ have value $v_1=1$ and $r_3$ have value $v_3=0$.
Let the value of resource $r_2$ be a random variable, where $v_2 = 1-\varepsilon$ with probability $1-p$ and $v_2 = 1+\varepsilon(1-p)$ with probability $p$.
When the agents are not informed of the realization of $v_2$, its expected value is $\mathbb{E}[v_2] = 1 - \varepsilon(1-p)^2< 1$ and there is a unique Nash equilibrium of $a = (r_1,r_2)$, providing a welfare of $W(a;\mu_0) = 2- \varepsilon(1-p)^2$.
When the full revelation signaling policy $\Pi = \{\{r_1\},\{r_2\}\}$ is used, then when $v_2 = 1-\varepsilon$, there is a unique equilibrium of $\strat^1 = (r_1,r_2)$, and when $v_2 = 1+\varepsilon(1-p)$ there are two equilibria, the worse of which is $\strat^2 = (r_2,r_3)$.
This gives an expected welfare of $W(\strat;\mu_0) = (1-p)(2-\varepsilon) + p(1+\varepsilon(1-p))$ and a gain of informing agents of 
${\rm VoI}^-(\Pi,f^{\rm mc}) = \frac{(1-p)(2-\varepsilon) + p(1+\varepsilon(1-p))}{2-\varepsilon(1-p)^2}.$
Letting $p \rightarrow 1$ and $\varepsilon \rightarrow 0$ we get ${\rm VoI}^-(\Pi,f^{\rm mc}) \rightarrow 1/2$.

Finally, we show the upper bound on $\voim$. Observe that the welfare of a Bayes-Nash strategy $\alpha = \{\Nash{a}(\mathbb{E}[v|\pi])\}_{\pi \in \Pi}$ under the signaling policy $\Pi$ satisfies
\begin{multline*}
    \sum_{k=1}^m p_k W(\Nash{a}(\mathbb{E}[v|\pi_k]);\mathbb{E}[v|\pi_k]) \leq \sum_{k=1}^m p_k W^\star(\mathbb{E}[v|\pi_k])\\
    \leq |\Pi|\cdot W^\star(\mathbb{E}_{\mu_0}[v]) \leq 2|\Pi|\cdot W(\Nash{a}(\mathbb{E}_{\mu_0}[v]);\mathbb{E}_{\mu_0}[v]),
\end{multline*}
where the first term is the expected payoff of $\alpha$, and the first inequality holds from $W^\star$ being the optimal welfare, the second holds from the upper bound on $\voip$, and the third holds from the price-of-anarchy bound.
\hfill$\qed$

\begin{figure}
    \centering
    \includegraphics[width=0.4\textwidth]{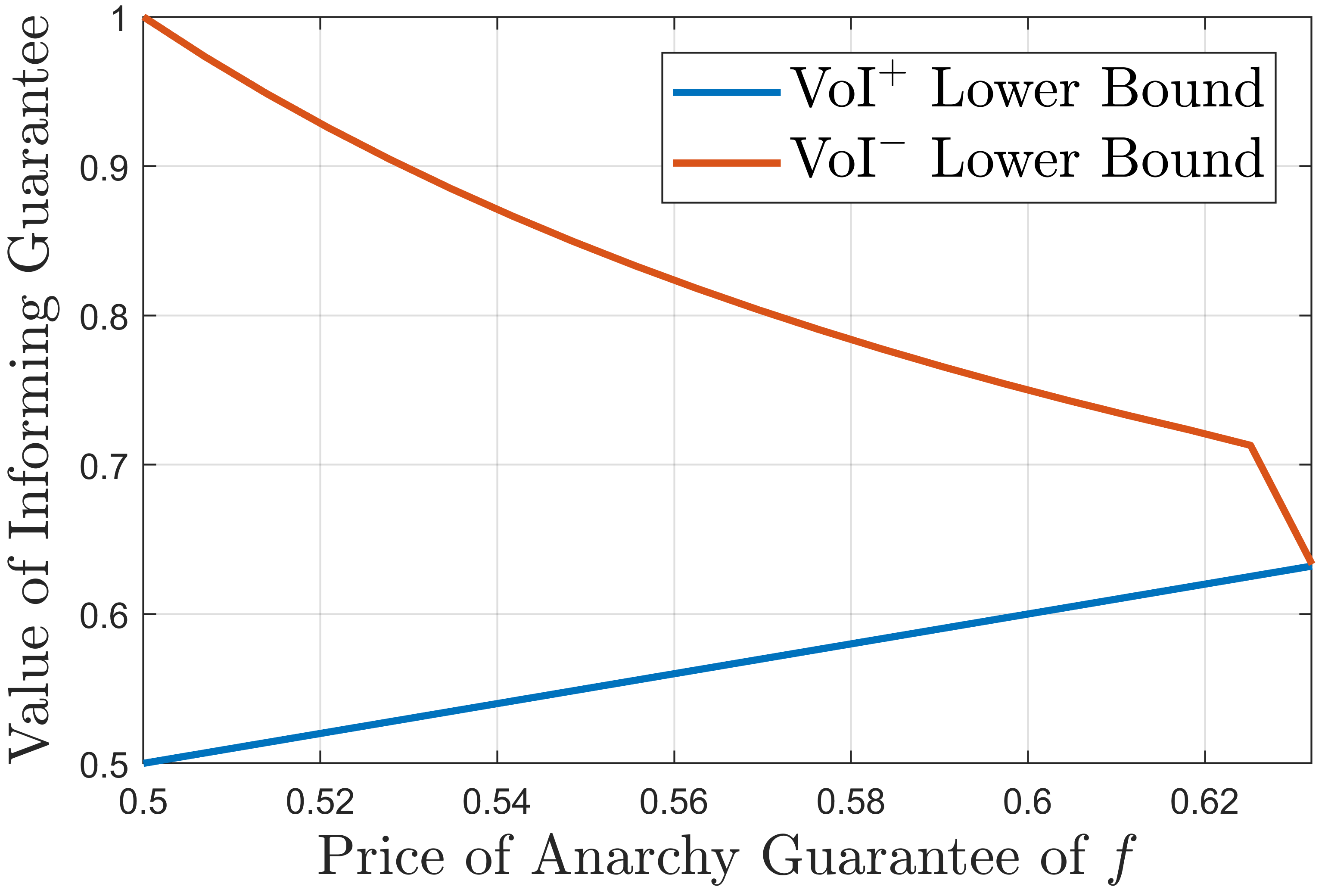}
    \caption{Lower bounds on the value of revealing information for the best- and worst-case equilibria.
    If the utility rule $f$ is designed to lessen the loss to worst-case equilibria (increasing $\voim$), then there is a greater possible loss to the worst-case equilibrium (decreasing $\voip$).
    This trade-off matches the tight bounds from \cref{thm:mc} and \cref{prop:gair}, which appear as the endpoints of this plot.
    The bounds are generated by comparing the value-of-informing to the price-of-anarchy/stability via \cref{thm:gen} and the price of stability to the price-of-anarchy via \cite[Theorem~4.1]{Ramaswamy2022}.
    }
    \label{fig:VoI_tradeoff}
\end{figure}

\subsection{Utility Design \& Informing Efficacy}\label{subsec:gen}

In \cref{subsec:mc}, we considered the special case in which agents' local objectives were aligned with the global objective using the marginal cost utility rule $f^{\rm mc}$.
In this section, we generalize this result to any utility rule by leveraging a connection to the \textit{price-of-stability} and \textit{price-of-anarchy}.

In the deterministic setting, the price-of-stability/anarchy are used to quantify how the best-case and worst-case equilibria approximate the system optimal.
These metrics can be generalized to the Bayesian setting but need not be informative or insightful on the effects revealing information within a single problem instance~\cite{Bhaskar2015} (i.e., failing to capture the benefits and consequences or comparing bounds derived from different problem instances).
Let $\poa(G,v,f) = \frac{\min_{\Nash{a} \in \NE(G,v,f)}W(\Nash{a};v)}{\max_{\Opt{a} \in \mathcal{A}}W(\Opt{a};v)}$ denote the price-of-anarchy for a deterministic maximum coverage problem, and let $\pos(G,v,f) = \frac{\max_{\Nash{a} \in \NE(G,v,f)}W(\Nash{a};v)}{\max_{\Opt{a} \in \mathcal{A}}W(\Opt{a};v)}$ denote the price-of-stability.

Though not immediately apparent, we establish a connection between the price-of-stability/anarchy in deterministic covering games and the value-of-informing in Bayesian covering games.
In \cref{thm:gen}, we leverage this connection to generate bounds on $\voip$ and $\voim$ for any utility design.

\begin{theorem}\label{thm:gen}
Let $\psi := \inf_{v \in \conv{\statespace}} \pos(G,v,f)$ and $\rho := \inf_{v \in \conv{\statespace}} \poa(G,v,f)$, then the value of informing for the best-case equilibrium satisfies
\begin{subequations}
\begin{equation}
    \psi \leq \voip(G,\mu_0,\Pi,f) \leq \psi^{-1}|\Pi|,
\end{equation}
and for the worst-case equilibrium satisfies
\begin{equation}
    \rho \leq \voim(G,\mu_0,\Pi,f) \leq \rho^{-1}|\Pi|.
\end{equation}
\end{subequations}
\end{theorem}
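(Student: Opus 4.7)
The plan is to closely mirror the proof of Theorem~\ref{thm:mc}, leveraging the same properties of $W^\star(\cdot)$---convexity, positive homogeneity, and componentwise monotonicity---together with the decomposition in Lemma~\ref{lem:BNE_average}. The only substantive change is that under an arbitrary utility rule $f$, the best- and worst-case Nash welfares no longer coincide with $W^\star$ or $W^\star/2$; instead, they are controlled by $\pos(G,v,f) \geq \psi$ and $\poa(G,v,f) \geq \rho$ for every $v \in \conv{\statespace}$. A key preliminary observation is that each conditional mean $\mathbb{E}[v|\pi_k]$ and the prior mean $\mathbb{E}_{\mu_0}[v]$ lie in $\conv{\statespace}$, so these uniform price-of-stability and price-of-anarchy bounds apply simultaneously to every term arising in the decomposition.

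For the lower bound $\voip \geq \psi$, I would decompose the best-case Bayes-Nash welfare via Lemma~\ref{lem:BNE_average} as $\sum_k p_k \max_{\Nash{a} \in \NE(G,\mathbb{E}[v|\pi_k],f)} W(\Nash{a};\mathbb{E}[v|\pi_k])$, bound each inner maximum below by $\psi \cdot W^\star(\mathbb{E}[v|\pi_k])$ using the definition of $\psi$, and apply Jensen's inequality to the convex function $W^\star$ to collect the terms into $\psi \cdot W^\star(\mathbb{E}_{\mu_0}[v])$. Because the uninformed best-case Nash welfare is trivially at most $W^\star(\mathbb{E}_{\mu_0}[v])$, this yields $\voip \geq \psi$. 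The argument for $\voim \geq \rho$ is identical, with ``worst'' replacing ``best'' throughout and $\rho$ replacing $\psi$.

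For the upper bounds, I would follow the telescoping argument from \eqref{eq:voip_mc_up}. The best-case Bayes-Nash welfare decomposition is upper bounded by $\sum_k p_k W^\star(\mathbb{E}[v|\pi_k])$, and combining positive homogeneity with componentwise monotonicity yields $\sum_k W^\star(p_k\mathbb{E}[v|\pi_k]) \leq |\Pi| \cdot W^\star(\mathbb{E}_{\mu_0}[v])$ exactly as in Theorem~\ref{thm:mc}. The uninformed best-case Nash welfare is bounded below by $\psi \cdot W^\star(\mathbb{E}_{\mu_0}[v])$ directly from the definition of $\psi$, giving $\voip \leq \psi^{-1}|\Pi|$. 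The upper bound on $\voim$ follows the same pattern with $\rho$ substituted for $\psi$. The main subtlety---and the step I would take most care to articulate---is why the relevant infima must be taken over $\conv{\statespace}$ rather than just $\statespace$: it is \emph{necessary} because the conditional and prior expectations can realize arbitrary points of this convex hull as $\Pi$ and $\mu_0$ vary, and \emph{sufficient} because every such expectation does indeed lie in $\conv{\statespace}$ by construction, so the uniform bounds $\psi$ and $\rho$ can be invoked inside each summand without slack.
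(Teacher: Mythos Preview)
Your proposal is correct and follows essentially the same route as the paper: the decomposition from Lemma~\ref{lem:BNE_average}, the convexity/positive homogeneity/monotonicity of $W^\star$, and the substitution of the uniform price-of-stability/anarchy bounds $\psi,\rho$ at the appropriate steps are precisely the ingredients the paper uses. Your explicit remark that all conditional and prior means lie in $\conv{\statespace}$ (so that $\psi$ and $\rho$ apply to every summand) is a point the paper leaves implicit, and your presentation is in fact tidier than the paper's, which writes out only the $\rho$-chains explicitly and leaves the $\psi$-case to an unfinished closing sentence.
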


\noindent\textit{Proof of \cref{thm:gen}}:
The proof will rely on the function $W^\star(v) = \max_{a \in \mathcal{A}} W(a;v)$ and several of its properties shown in the proof of \cref{thm:mc}.
First, we prove the bounds on $\voip$.
Let $a \in \NE(G,\mathbb{E}_{\mu_0}[v],f)$ be an arbitrary Nash equilibrium in the deterministic game $G$ with values $\mathbb{E}_{\mu_0}[v]$ and utility rule $f$, and let $\alpha \in \BNE(G,\mu_0,\Pi,f)$ be an arbitrary Bayes-Nash equilibrium in $G$ with prior $\mu_0$ on $v$ and information signaling policy $\Pi$ and utility function $f$.
Observe that the expected welfare of $\alpha$ satisfies
\begin{subequations}\label{eq:gen_ub}
\begin{align}
\ExpW(\alpha;\mu_0,\Pi) &= \sum_{k=1}^m p_k W(\alpha(\pi_k);\mathbb{E}[v|\pi_k])\label{eq:gen_ub_a}\\
&\leq \sum_{k=1}^m p_k W^\star(\mathbb{E}[v|\pi_k])\label{eq:gen_ub_b}\\
&\leq |\Pi|W^\star(\mathbb{E}_{\mu_0}[v])\label{eq:gen_ub_c}\\
&\leq \rho^{-1}|\Pi|\ExpW(a;\mathbb{E}_{\mu_0}[v]),\label{eq:gen_ub_d}
\end{align} 
\end{subequations}
where \eqref{eq:gen_ub_a} holds from \cref{lem:BNE_average}, \eqref{eq:gen_ub_b} holds from the definition of $W^\star$, \eqref{eq:gen_ub_c} holds from the monotonicity and positive homogeneity of $W^\star$ (previously shown in \eqref{eq:voip_mc_up_b}-\eqref{eq:voip_mc_up_d}), and \eqref{eq:gen_ub_d} holds from the definition of $\rho$.

Similarly, we can show that the expected total welfare of the uninformed equilibrium $a$ satisfies
\begin{subequations}\label{eq:gen_lb}
\begin{align}
\ExpW(a;\mu_0) &\leq W^\star(v)\label{eq:gen_lb_a}\\
&\leq \sum_{k=1}^m p_k W^\star(\mathbb{E}[v|\pi_k])\label{eq:gen_lb_b}\\
&\leq \sum_{k=1}^m p_k \rho^{-1} W(\alpha(\pi);\mathbb{E}[v|\pi_k]) \label{eq:gen_lb_c}\\
&= \rho^{-1}W(\alpha;\mu_0,\Pi),\label{eq:gen_lb_d}
\end{align} 
\end{subequations}
where \eqref{eq:gen_lb_b} holds from the convexity of $W^\star$, \eqref{eq:gen_lb_c} holds from the definition of $\rho$, and \eqref{eq:gen_lb_d} holds from \cref{lem:BNE_average}.
Because \eqref{eq:gen_ub} and \eqref{eq:gen_lb} hold for any $a \in \NE(G,\mathbb{E}_{\mu_0}[v],f)$ and $\alpha \in \BNE(G,\mu_0,\Pi,f)$, it holds for each being the respective welfare minimizing equilibria.
If we consider only the case where $a$ and $\alpha$ are the welfare maximizing equilibria, i.e., $a \in \argmax_{a^\prime \in \NE(G,\mathbb{E}_{\mu_0}[v],f)}W(a^\prime;\mathbb{E}_{\mu_0}[v])$ and $\alpha \in \argmax_{\alpha^\prime \in \BNE(G,\mu_0,v,f)}\ExpW(\alpha^\prime;\mu_0,\Pi)$
\hfill\qed

\subsection{Optimistic\hspace{2pt}/\hspace{2pt}Pessimistic Trade-Off}\label{subsec:trade}
\cref{thm:gen} highlighted the fact that altering the utility design will change the impact of information revealing; however, the given bounds need not be tight.
It appears that using a utility design with a higher price-of-anarchy in the deterministic setting will lead to an improved lower bound on $\voim$.
As such, we will more closely consider the price-of-anarchy maximizing rule $$f^{\rm g}(x) := (x-1)!\frac{\frac{1}{(n-1)(n-1)!}+\sum_{i=x}^{n-1}\frac{1}{i!}}{\frac{1}{(n-1)(n-1)!}+\sum_{i=1}^{n-1}\frac{1}{i!}},$$
when $x>0$, and $f^{\rm g}(0)=0$, proven optimal in~\cite{Gairing2009}.
In \cref{prop:gair}, we show tight lower bounds on the value-of-informing while using $f^g$.
Interestingly, though the pessimistic guarantee has improved, the optimistic guarantee has worsened.

\begin{proposition}\label{prop:gair}
While using the the price-of-anarchy maximizing rule $f^{\rm g}$, the value-of-informing for the best-case equilibrium satisfies
\begin{subequations}
\begin{equation}\label{eq:prop_voip}
    1-\frac{1}{e} \leq \voip(G,\mu_0,\Pi,f^{\rm g}),
\end{equation}
and for the worst-case equilibrium satisfies
\begin{equation}
    1-\frac{1}{e} \leq \voim(G,\mu_0,\Pi,f^{\rm g}),
\end{equation}\label{eq:prop_voim}
\end{subequations}
Further, each of these lower bounds is tight. 
\end{proposition}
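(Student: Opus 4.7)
My plan is to obtain both lower bounds as direct corollaries of \cref{thm:gen} combined with the classical price-of-anarchy analysis of $f^{\rm g}$, and to establish tightness by a Bayesian construction modeled on the worst-case example in the proof of \cref{thm:mc}.

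For the lower bounds, \cref{thm:gen} yields $\voip(G,\mu_0,\Pi,f^{\rm g}) \geq \psi$ and $\voim(G,\mu_0,\Pi,f^{\rm g}) \geq \rho$, with $\psi$ and $\rho$ the infima over $v \in \mathrm{conv}(\statespace)$ of the price of stability and price of anarchy of $f^{\rm g}$, respectively. Since $f^{\rm g}$ is the price-of-anarchy-optimal utility for covering games, \cite{Gairing2009} guarantees $\poa(G,v,f^{\rm g}) \geq 1-1/e$ uniformly in $G$ and $v$, hence $\rho \geq 1-1/e$. Because $\pos(G,v,f) \geq \poa(G,v,f)$ for every instance, we also have $\psi \geq \rho \geq 1-1/e$. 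Substituting these into \cref{thm:gen} yields \eqref{eq:prop_voip}--\eqref{eq:prop_voim}.

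For tightness, I would adapt the perturbative instance used in the worst-case part of \cref{thm:mc}. Fix a deterministic covering instance $(G^{\sharp}, v^{\sharp})$ for which $\poa(G^{\sharp}, v^{\sharp}, f^{\rm g}) = 1-1/e$, constructed as in \cite{Gairing2009}; in such tight instances the Nash equilibrium is unique, so the price of stability also equals $1-1/e$ there. Consider a two-point prior $\mu_0$ supported on $\{v^{\sharp}, v^{\natural}\}$, where $v^{\natural}$ is a small perturbation of $v^{\sharp}$ chosen so that the uninformed game $(G^{\sharp}, \mathbb{E}_{\mu_0}[v])$ admits a unique Nash equilibrium that coincides with a welfare-maximizing allocation. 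Under the full-revelation signaling policy $\Pi = \{\{v^{\sharp}\},\{v^{\natural}\}\}$ each signal collapses the agents into (a perturbation of) the tight instance, so both the best- and worst-case Bayes-Nash welfares concentrate around $(1-1/e)\,\mathbb{E}_{\mu_0}[W^\star(v)]$ in the limit. Because $W^\star$ is piecewise linear and positively homogeneous (as already used in the proof of \cref{thm:mc}), the perturbation can be arranged so that $\mathbb{E}_{\mu_0}[W^\star(v)] \to W^\star(\mathbb{E}_{\mu_0}[v])$, while the uninformed Nash attains $W^\star(\mathbb{E}_{\mu_0}[v])$ exactly. Letting the perturbation shrink drives both $\voip$ and $\voim$ to $1-1/e$, matching the lower bounds.

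The principal technical obstacle is engineering the perturbation: one must preserve the tightness of the Gairing instance under $f^{\rm g}$ while breaking enough symmetry in $\mathbb{E}_{\mu_0}[v]$ to force the uninformed game to admit an optimal Nash (so that no slack is incurred in the denominators of the two ratios). This balance is analogous to the one struck in the worst-case construction of \cref{thm:mc}, and relies on the robustness of the tight POA construction of \cite{Gairing2009} under small value perturbations; once this is in place, the matching ratios follow from a routine limit computation.
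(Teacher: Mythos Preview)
Your derivation of the lower bounds is correct and is exactly what the paper does: invoke \cref{thm:gen} together with the uniform $1-1/e$ price-of-anarchy guarantee for $f^{\rm g}$ from \cite{Gairing2009}, and note $\psi\geq\rho$.

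The tightness argument, however, has a genuine gap. You ask for a two-point prior in which each realization $v^{\sharp},v^{\natural}$ is (essentially) a tight Gairing instance---so the unique Nash there attains only $(1-1/e)W^\star$---while simultaneously the mean $\mathbb{E}_{\mu_0}[v]$ yields a game whose unique Nash is \emph{optimal}. But you also insist that $v^{\natural}$ be a small perturbation of $v^{\sharp}$, which forces $\mathbb{E}_{\mu_0}[v]$ to lie near $v^{\sharp}$ as well. In the Gairing tight instances the bad Nash is strict, hence robust to small value perturbations; the mean game will therefore inherit a Nash with welfare ratio close to $1-1/e$, not $1$. With that denominator, both $\voip$ and $\voim$ collapse to values near $1$, not $1-1/e$. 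The two design goals you list---``preserve tightness under perturbation'' and ``force the uninformed Nash to be optimal''---are in direct conflict, and no amount of symmetry-breaking resolves this while the perturbation stays small.

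The paper's construction avoids this tension by \emph{not} starting from a tight deterministic instance. It builds a game with $n-1$ private resources of value $f^{\rm g}(n)-\varepsilon$ and $z$ shared resources, exactly one of which (uniformly at random) has value $1$. Uninformed, each shared resource has expected value $1/z<f^{\rm g}(n)-\varepsilon$, so the first $n-1$ agents strictly prefer their private resources; the resulting unique Nash is optimal with welfare $1+(n-1)(f^{\rm g}(n)-\varepsilon)$. Under full revelation, all $n-1$ agents pile onto the single valuable shared resource (where agent $n$ already sits), yielding a unique Nash of welfare $1$. Both equilibria are unique, so $\voip=\voim=1/\bigl(1+(n-1)(f^{\rm g}(n)-\varepsilon)\bigr)\to 1-1/e$. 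The key idea you are missing is that the randomness itself manufactures the gap: averaging \emph{dilutes} the shared-resource values enough to make the uninformed game benign, while each realization concentrates value and recreates a tight congestion instance.
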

\noindent\textit{Proof of \cref{prop:gair}}:
\cref{thm:gen} can be used to show that each of \eqref{eq:prop_voip} and \eqref{eq:prop_voim} are valid lower bounds.
To see these bounds are tight, consider a maximum coverage problem with resource set $\mathcal{R} = \{r_{p_i}\}_{i=1}^{n-1} \bigcup \{r_{s_j}\}_{j=1}^z$ where $z = \lceil 1/(f^g(n)-\varepsilon) \rceil$.
The first $n-1$ players can select a single resource from the public resources $\{r_{s_j}\}_{j=1}^z$ or their respective private resource $r_{p_i}$ for player $i$, i.e., $\mc{A}_i=\{r_{s_1},\ldots,r_{s_z},r_{p_i}\},~\forall i\in N\setminus \{n\}$.
The final player has exactly one action to use all the shared resources simultaneously $\mathcal{A}_n = (r_{s_1},\ldots,r_{s_z})$.
Each private resource $r_{p_i}$ has value $ v_{r_{p_i}} = f^{\rm g}(n)-\varepsilon$ with probability one where $\varepsilon >0$.
The value of the shared resources are random; each takes on value 1 w.p. $1/z < f^{\rm g}(n)-\varepsilon$ and 0 otherwise, and follow distribution $\mu_0$ such that exactly one is ever the high value.
When uninformed, each of the first $n-1$ players strictly prefers their private resource, giving a unique equilibrium welfare of $W(\Nash{a};\mu_0) = 1+(n-1)(f^{\rm g}(n)-\varepsilon)$.
Under the full information reveal policy $\Pi$, each of the $n-1$ agents strictly prefer to use the one shared resource of value 1, giving an expected welfare of $\ExpW(\BNash{\alpha};\mu_0,\Pi) = 1$.
Because each of the informed and uninformed equilibria are unique, this gives $\voip(\mu_0,\Pi) = \voim(\mu_0,\Pi) = \frac{1}{1+(n-1)(f^{\rm g}(n) - \varepsilon)} \rightarrow 1-\frac{1}{e}$ as $\varepsilon \rightarrow 0$ and $n \rightarrow \infty$.
\hfill\qed

Comparing the lower bounds of \cref{thm:mc} and \cref{prop:gair} highlights a trade-off between the guarantees of revealing information with the optimistic and pessimistic perspectives.
We further examine this trade-off in \cref{fig:VoI_tradeoff}.
Using \cref{thm:gen} and recent results on the price-of-anarchy/stability~\cite{Ramaswamy2022}, we can characterize lower bounds on $\voip$ and $\voim$ for different utility rules.

\section{Conclusion and Future Work}
We addressed the possible benefit and consequences of revealing information to local decision-makers in a distributed system.
By lower and upper bounding the value-of-informing, this work (1) quantified the possible effects of information revealing and (2) identified a trade-off between the guarantees of revealing information in the optimistic and pessimistic perspective.
Future work will answer the design question and develop methods to solve for information signaling policies that optimize expected system welfare.

\bibliographystyle{IEEEtran}
\bibliography{../../../../My_Library}

\end{document}